\theoremstyle{plain}
 \newtheorem{thm}{Theorem}[section]
 \newtheorem{lem}{Lemma}[section]
 \newtheorem{cor}{Corollary}[section]
\theoremstyle{definition}
 \newtheorem{exm}{Example}[section]
 \newtheorem{rem}{Remark}[section]
\numberwithin{equation}{section}
\newcommand{\R}{\mathbb{R}}
\title[SYMMETRIES OF LINE BUNDLES AND NOETHER THEOREM]{SYMMETRIES OF LINE BUNDLES AND NOETHER
THEOREM FOR TIME-DEPENDENT NONHOLONOMIC SYSTEMS}
\subjclass[2010]{37J15, 37J60, 70F25, 70H25, 70H33}
\author[Jovanovi\'c]{\bfseries Bo\v zidar Jovanovi\'c}
\address{
Mathematical Institute SANU \\
Serbian Academy of Sciences and Arts \\
Kneza Mihaila 36, 11000 Belgrade\\
Serbia}
\email{bozaj@mi.sanu.ac.rs}
\begin{document}

\begin{abstract}
We consider Noether symmetries of the equations defined by the
sections of characteristic line bundles of nondegenerate
1-forms and of the associated perturbed systems.
It appears that this framework can be used for time-dependent systems with constraints and nonconservative forces, allowing a quite simple and transparent formulation of the momentum equation and the Noether theorem in their general forms.
\end{abstract}

\maketitle

\section{Introduction}

The Noether theorem on integrals related to invariant variational problems is one of the basic theorems in mechanics, both for finite and infinite-dimensional systems \cite{KS, Noether}. There has been a
lot of efforts on its generalization and our reference list \cite{Ar}--\cite{SS} covers
just a part of contributions for finite-dimensional systems.

Recently, we presented the problem for non-constrained systems through the perspective of contact geometry \cite{Jo1}.
We considered Noether symmetries of the equation
\begin{equation}\label{clb}
\dot x=Z,
\end{equation}
where $Z$ is a section of the characteristic line bundle of a nondegenerate
1-form $\mathcal L=\ker d\alpha$, as symmetries that preserve the action functional $A_\alpha[\gamma]=\int_\gamma \alpha$. In the case of time-dependent Hamiltonian systems, Noether symmetries are transformations that preserve Poincar\'e--Cartan (modulo addition of a closed 1-form) and, via Legendre transformation, this is similar to the notion of symmetry of Lagrangian systems given by Crampin in \cite{Cr}.

It appears that the above framework can be used for time-dependent systems with constraints and nonconservative forces, allowing a quite simple and transparent formulation of the momentum equation and of the Noether theorem in their general forms.

We briefly recall on the notion of Noether symmetries for systems defined by the sections of non-degenerate 1-forms studied in \cite{Jo1} and formulate the statements for perturbed systems
\begin{equation}\label{ZR}
\dot x=Z+P,
\end{equation}
where $P$ is not a section of $\mathcal L=\ker d\alpha$ (Theorems \ref{druga}, \ref{treca}, Section \ref{S2}). In Section \ref{S3} we apply Theorem \ref{druga} and obtain the main results: a general momentum equation and Noether theorem for time-dependent nonholonomic systems subjected to nonconservative forces (Theorems \ref{cetvrta}, \ref{cetvrta*}). In particular, when we deal with symmetries that are prolongation of time-dependent vector fields on the configuration space, we get a time-dependent variant of the so called gauge symmetries studied in \cite{BS, BGMD,  FRS, FS} (see Corollary \ref{peta}) and the moving energy integral given in \cite{FS, FS2} (see Corollary \ref{sesta}).

\section{Noether symmetries of characteristic line bundles}\label{S2}

\subsection{Dynamical systems defined by characteristic line bundles}

Let $(M,\alpha)$ be a $(2n+1)$--dimensional manifold
endowed with a 1-form $\alpha$, such that $d\alpha$ has the maximal rank $2n$. The kernel of
$d\alpha$ defines a one dimensional distribution
\[
\mathcal L=\cup_x \mathcal L_x, \qquad \mathcal L_x=\ker d\alpha\vert_x
\]
 of the
tangent bundle $TM$ called \emph{characteristic line bundle}.

Also,
at every point $x\in M$ we have the \emph{horizontal space}
$
\mathcal H_x=\ker\alpha\vert_x.
$
In the case when $\alpha$ differs from zero on $M$, then the
collection of horizontal subspaces $ \mathcal H=\cup_x \mathcal
H_x=\cup_x\ker\alpha\vert_x $ is a nonintegrable $2n$--dimensional distribution
of $TM$, called \emph{horizontal distribution}. If, in addition,
$\alpha\wedge(d\alpha)^n\ne 0$, then $\alpha$ is a \emph{contact form}, $(M,\alpha)$ is a
\emph{strictly contact manifold}, and  $\mathcal H$
is a \emph{contact} distribution \cite{LM}.

The integral curves $\gamma: [a,b]\to M$ of the
characteristic line bundle $\mathcal L$ are extremals of the
action functional
\begin{equation}\label{action}
A_\alpha[\gamma]=\int_\gamma \alpha=\int_a^b \alpha(\dot\gamma)dt
\end{equation}
in a class of variations $\gamma_s$ with fixed endpoints.
Recall that a {\it variation} of a curve $\gamma: [a,b]\to M$ is a
family of curves $\gamma_s(t)=\Gamma(t,s)$, where $\Gamma: [a,b]\times [0,\epsilon] \to M$ is a mapping, such that
$\gamma(t)=\Gamma(t,0)$, $t\in [a,b]$. The endpoints are fixed if $\gamma_s(a)\equiv \gamma(a)$, $\gamma_s(b)\equiv \gamma(b)$.\footnote{Here the usual assumption that the endpoints are fixed can be relaxed: we
can consider also the variations $\gamma_s(t)$, such that $\delta\gamma(a)$ and $\delta\gamma(b)$ are horizontal vectors, where
$\delta\gamma(t)$ denotes the vector field $\frac{\partial \Gamma}{\partial s}\vert_{s=0}\in
T_{\gamma(t)} M$, e.g., see \cite{Jo1}.}

Consider the equation \eqref{clb},
where $Z$ is a section of $\mathcal L$. We say that a
vector field $\zeta$
 is a \emph{Noether symmetry} of
equation \eqref{clb} if
 the induced one-parameter group of
diffeomeomorphisms $g_s^\zeta$ preserves the 1-form $\alpha$.
Then, by analogy with the classical formulation
\cite{SC1, Noether}, $g_s^\zeta$ preserves the action
functional \eqref{action}.

Note that $\mathcal L$ is determined by the
cohomology class $[\alpha]$ ($\mathcal L=\ker d\alpha'$, where $\alpha'=\alpha+\beta$, $\beta$ is a closed 1-form on $M$),
while $\mathcal H$ depends on $\alpha$. Thus, the integral curves of $\mathcal L$ are also extremals the action \eqref{action} with $\alpha$ replaced by $\alpha'\in[\alpha]$. We say that
$\zeta$ is a \emph{weak Noether symmetry} of
equation \eqref{clb} if we have the invariance of the perturbation $\alpha'=\alpha+\beta$, $d\beta=0$, modulo the differential of a function $f$:
\begin{equation}\label{wns1}
L_\zeta(\alpha')=L_{\zeta}(\alpha+\beta)=df.
\end{equation}
That is, $g^\zeta_s$ preserves the action $A_{\alpha'}[\gamma]=\int_\gamma \alpha'$ modulo $f$:
\begin{align}\label{modul1}
\frac{d}{ds}A_{\alpha'}[\gamma_s]\Big\vert_{s=0} &=\frac{d}{ds}\Big(\int_{\gamma_s}
\alpha+\beta\Big)\Big\vert_{s=0}\\
\nonumber&=\int_{\gamma_s} df\Big\vert_{s=0}=\int_a^b df(\gamma(t))=f(\gamma(b))-f(\gamma(a)),
\end{align}
where $\gamma: [a,b]\to M$ is an arbitrary smooth curve and the variation $\gamma_s$ is determined by the one-parameter group of
diffeomeomorphisms $g_s^\zeta$: $\gamma_s=g^\zeta_s(\gamma)$.

Now, let $\gamma: [a,b]\to M$ be a trajectory of \eqref{clb} and $\gamma_s=g^\zeta_s(\gamma)$. The relation $\dot\gamma(t)\in \mathcal L_{\gamma(t)}=\ker d\alpha'\vert_{\gamma(t)}$
and Cartan'a formula,
\begin{equation}\label{CartanF}
L_\zeta=i_\zeta\circ d+d\circ i_\zeta,
\end{equation}
imply
\begin{align}\label{modul2}
\frac{d}{ds}A_{\alpha'}[\gamma_s]\Big\vert_{s=0}&=\int_a^b
d(\alpha+\beta)(\zeta\vert_{\gamma(t)},\dot\gamma(t))dt+\int_a^b d\big((\alpha+\beta)(\zeta\vert_{\gamma(t)})\big)\\
\nonumber&=(\alpha+\beta)(\zeta\vert_{\gamma(b)})-(\alpha+\beta)(\zeta\vert_{\gamma(a)}),
\end{align}
and by comparing \eqref{modul1} and \eqref{modul2}, we obtain the identity
\[
(\alpha+\beta)(\zeta\vert_{\gamma(a)})-f(\gamma(a))=(\alpha+\beta)(\zeta\vert_{\gamma(b)})-f(\gamma(b)).
\]

Thus, the weak Noether symmetries induce conservation quantities described in the following statement (see \cite{Jo1}; for the Lagrangian setting and $M=\R\times TQ$, see \cite{SC1, Cr}).

\begin{thm}\label{prva}
Let $\zeta$ be a weak Noether symmetry of
equation \eqref{clb} that satisfies \eqref{wns1}. Then:

\medskip

{\rm (i)}  The function
\begin{equation}
\label{NetInt}
 J=i_\zeta(\alpha+\beta)-f
\end{equation}
  is a first integral of
\eqref{clb}.

\medskip

{\rm (ii)}  $J$ is preserved under the flow of
  $g_s^\zeta$ as well:
$L_\zeta(J)=0$.

\medskip

{\rm (iii)}  The commutator of vector fields $[Z,\zeta]$ is a
section of $\mathcal L$, i.e., $g_s^\zeta$ permutes the
trajectories of \eqref{clb} modulo reparametrization.
\end{thm}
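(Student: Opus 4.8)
The plan is to reduce all three parts to a single identity for $dJ$ and then read off each conclusion by contracting that identity with the appropriate vector field. Throughout I would use the two standing facts: since $\beta$ is closed we have $d\alpha'=d\alpha$, hence $\mathcal L=\ker d\alpha'=\ker d\alpha$; and since $Z$ is a section of $\mathcal L$, by definition $i_Z d\alpha=0$. The key relation comes from Cartan's formula \eqref{CartanF} applied to $\alpha'=\alpha+\beta$. Using $d\alpha'=d\alpha$ it gives $L_\zeta\alpha'=i_\zeta d\alpha+d(i_\zeta\alpha')$, and substituting the weak Noether condition \eqref{wns1}, namely $L_\zeta\alpha'=df$, and solving for the exact term yields $d(i_\zeta\alpha')=df-i_\zeta d\alpha$. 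Since $J=i_\zeta\alpha'-f$, this is exactly
\[
dJ=-\,i_\zeta d\alpha .
\]
This one identity is the crux; everything else is a contraction. Note that $\beta$, being closed, never contributes to $dJ$, which is the geometric reason the construction depends only on the class $[\alpha]$.

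For part (i), $J$ is a first integral of \eqref{clb} precisely when $L_Z J=dJ(Z)=0$. Contracting the displayed identity with $Z$ gives $dJ(Z)=-d\alpha(\zeta,Z)=d\alpha(Z,\zeta)=(i_Zd\alpha)(\zeta)=0$, because $i_Zd\alpha=0$; this is the infinitesimal counterpart of the endpoint identity already obtained above from \eqref{modul1}--\eqref{modul2}. For part (ii), contracting the same identity with $\zeta$ gives $L_\zeta J=dJ(\zeta)=-d\alpha(\zeta,\zeta)=0$ by the antisymmetry of $d\alpha$, so $J$ is automatically invariant under the flow $g_s^\zeta$.

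For part (iii) I would show $i_{[Z,\zeta]}d\alpha=0$, which is exactly the assertion that $[Z,\zeta]$ is a section of the line bundle $\mathcal L=\ker d\alpha$. Here I would invoke the standard commutator identity $i_{[Z,\zeta]}=L_Z i_\zeta-i_\zeta L_Z$, applied to $d\alpha$. The second term vanishes since $L_Z d\alpha=d(i_Zd\alpha)+i_Z d(d\alpha)=0$, again using $i_Zd\alpha=0$. For the first term I would use $i_\zeta d\alpha=-dJ$ from the key identity together with part (i): $L_Z(i_\zeta d\alpha)=-L_Z(dJ)=-d(L_Z J)=0$. Hence $i_{[Z,\zeta]}d\alpha=0$. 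As $\mathcal L$ is one-dimensional and spanned by $Z$, this means $[Z,\zeta]$ is pointwise proportional to $Z$, so $g_s^\zeta$ carries trajectories of \eqref{clb} to trajectories up to reparametrization.

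The computations are elementary and there is no serious analytic obstacle; the proof is essentially the observation that the weak Noether condition collapses, via Cartan's formula, to the single identity $dJ=-i_\zeta d\alpha$. The only points requiring care are the bookkeeping of signs, the consistent use of $d\beta=0$ so that the perturbation $\beta$ drops out of $dJ$, and, in part (iii), recalling the commutator identity $[L_Z,i_\zeta]=i_{[Z,\zeta]}$ together with the fact that $L_Z$ annihilates $d\alpha$.
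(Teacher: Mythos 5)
Your proof is correct, but it takes a genuinely different route from the paper's own argument for this theorem. The paper proves part (i) variationally: it computes the derivative of the action $A_{\alpha'}[\gamma_s]$ along the variation $\gamma_s=g_s^\zeta(\gamma)$ of a trajectory in two ways --- once from the weak Noether condition \eqref{wns1}, giving \eqref{modul1}, and once from Cartan's formula together with $\dot\gamma(t)\in\ker d\alpha'$, giving \eqref{modul2} --- and equates the results to obtain the endpoint identity $(\alpha+\beta)(\zeta\vert_{\gamma(a)})-f(\gamma(a))=(\alpha+\beta)(\zeta\vert_{\gamma(b)})-f(\gamma(b))$, i.e.\ constancy of $J$ along trajectories; parts (ii) and (iii) are not proved in the text but deferred to \cite{Jo1}. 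You instead collapse everything to the single infinitesimal identity $dJ=-i_\zeta d\alpha$ and read off all three parts by contraction, using $[L_Z,i_\zeta]=i_{[Z,\zeta]}$ and $L_Z d\alpha=0$ for (iii). Notably, your key identity is exactly the paper's \eqref{izvod2}, which it derives later in the proof of Theorem \ref{druga}, and your part (i) is precisely the computation \eqref{ZJ} specialized to $P=0$; so your argument is consistent with the paper's toolkit even though it is not the proof the paper gives here. What the variational route buys is the Noether-theoretic interpretation: the symmetry preserves the action up to the gauge term $f$, and the conserved quantity appears as an endpoint pairing valid on any trajectory segment. What your route buys is economy and completeness: one identity proves (i)--(iii) uniformly, including (ii), which is immediate from the antisymmetry of $d\alpha$, and (iii), which the paper only cites. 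One small caveat in (iii): $Z$ is a section of $\mathcal L$ but need not be nonvanishing, so $[Z,\zeta]$ is pointwise proportional to $Z$ only where $Z\ne 0$; however, the statement you actually establish, $i_{[Z,\zeta]}d\alpha=0$, is exactly what the theorem asserts, so this costs nothing.
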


It is natural to refer to \eqref{NetInt} as a \emph{Noether function} associated to the week Noether symmetry $\zeta$.

\subsection{Noether symmetries of time-dependent Hamiltonian equations}
The basic example is the extended phase space
endowed with the {Poincar\'e--Cartan} 1-form
\begin{equation}\label{cm}
(M,\alpha)=(\R\times T^*Q,pdq-Hdt).
\end{equation}
Namely, sections of $\mathcal L=\ker
d(pdq-Hdt)$ are of the form $Z_\mu=\mu(t,q,p) Z$, where $Z$ is the
vector field defining the Hamiltonian flow of $H$ in the extended phase space (e.g., see \cite{LM})
\begin{equation}\label{reb}
Z=\frac{\partial}{\partial
t}+\sum_i\big(\frac{\partial H}{\partial
p_i}\frac{\partial}{\partial q^i}- \frac{\partial H}{\partial
q_i}\frac{\partial}{\partial p_i}\big).
\end{equation}

The action functional \eqref{action} for $\alpha=pdq-Hdt$ implies
Poincar\'e's variant of the Hamiltonian principle of least action
\cite{Ar}, while Theorem \ref{prva} is a natural generalization of the classical Noether theorem (see subsection \ref{CNTP}).
Recently, a similar approach to the higher order Lagrangian problems is given in \cite{FiSp}.

\subsection{Noether integrals for perturbed systems}
Consider a perturbation \eqref{ZR} of equation \eqref{clb} by a vector field $P$, which is not a section of $\mathcal L=\ker d\alpha$.
Also, let $\mathcal M\subset M$ be an invariant submanifold of the system \eqref{ZR}, so we can consider the system restricted to $\mathcal M$.
The following observation, although quite elementary, is fundamental in our considerations.

\begin{thm}\label{druga}
Assume that $\zeta$ satisfies \eqref{wns1} restricted to $\mathcal M$. 
Then the derivative of the Noether function \eqref{NetInt} along the flow of \eqref{ZR} equals
\begin{equation}
\frac{d}{dt}J\big\vert_\mathcal M=d\alpha(P,\zeta)\vert_\mathcal M.
\end{equation}
\end{thm}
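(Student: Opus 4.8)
The plan is to compute the time derivative of the Noether function $J$ along the trajectories of \eqref{ZR} as a contraction with the generating vector field, and then use Cartan's formula \eqref{CartanF} together with the weak Noether condition \eqref{wns1} to collapse everything to a single contraction of $d\alpha$. Write $\alpha'=\alpha+\beta$ and let $W=Z+P$ be the vector field generating \eqref{ZR}. For a trajectory $\gamma\subset\mathcal M$ one has $\frac{d}{dt}J=dJ(\dot\gamma)=i_W\,dJ$, evaluated at points of $\gamma$. Since $\mathcal M$ is an invariant submanifold, $W$ is tangent to $\mathcal M$, so only the restriction of $dJ$ to $T\mathcal M$ enters the pairing, and this is precisely what the hypothesis ``\eqref{wns1} restricted to $\mathcal M$'' controls.

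First I would compute $dJ$ on $\mathcal M$ from $J=i_\zeta\alpha'-f$. Cartan's formula \eqref{CartanF} applied to $\alpha'$ gives $d(i_\zeta\alpha')=L_\zeta\alpha'-i_\zeta\,d\alpha'$, and substituting the weak Noether condition $L_\zeta\alpha'=df$ valid on $\mathcal M$ yields $dJ=(df-i_\zeta\,d\alpha')-df=-\,i_\zeta\,d\alpha'$. Because $\beta$ is closed, $d\alpha'=d\alpha+d\beta=d\alpha$, and hence $dJ\big\vert_{\mathcal M}=-\,i_\zeta\,d\alpha$. Next I would contract this with $W$: using $i_W(i_\zeta\,d\alpha)=d\alpha(\zeta,W)$ we get $\frac{d}{dt}J=-\,d\alpha(\zeta,Z+P)=-\,d\alpha(\zeta,Z)-d\alpha(\zeta,P)$. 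The term with $Z$ drops out because $Z$ is a section of $\mathcal L=\ker d\alpha$, so $i_Z\,d\alpha=0$ and $d\alpha(\zeta,Z)=0$. What survives is $-\,d\alpha(\zeta,P)=d\alpha(P,\zeta)$ by antisymmetry of $d\alpha$, which is exactly the asserted identity.

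The only point requiring care, and really the entire content of the statement, is the bookkeeping behind the phrase ``restricted to $\mathcal M$''. The identity $dJ=-\,i_\zeta\,d\alpha$ need not hold as an equality of one-forms on all of $M$, since \eqref{wns1} is assumed only along $\mathcal M$; however, the invariance of $\mathcal M$ guarantees that $W=Z+P$ is tangent to $\mathcal M$, so pairing $dJ$ with $W$ at points of $\mathcal M$ uses only the components of the one-forms that \eqref{wns1} pins down there. I therefore expect no analytic obstacle: once the roles of $Z$ (a characteristic section, killing its contraction with $d\alpha$), of $P$ (the genuine perturbation, producing the source term), and of the invariance hypothesis (ensuring tangency to $\mathcal M$) are separated, the result reduces to a one-line Cartan-calculus computation.
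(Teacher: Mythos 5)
Your proof is correct and follows essentially the same route as the paper's: Cartan's formula \eqref{CartanF} combined with \eqref{wns1} yields $i_\zeta d\alpha=-dJ$ on $\mathcal M$, and contracting with $Z+P$ kills the $Z$-term since $Z$ is a section of $\ker d\alpha$, leaving $d\alpha(P,\zeta)$. Your closing remark on why the restriction to $\mathcal M$ is harmless (tangency of $Z+P$ to the invariant submanifold) is a careful point the paper leaves implicit, but it does not change the argument.
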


\begin{proof}
From the definition \eqref{wns1} and Cartan's formula \eqref{CartanF} we have
\begin{equation}\label{izvod2}
i_\zeta d\alpha=-d(i_\zeta(\alpha+\beta))+df=-dJ,
\end{equation}
implying the statement:
\begin{equation}\label{ZJ}
L_{Z+P}J=i_{Z+P}(-i_\zeta d\alpha)=d\alpha(Z,\zeta)+d\alpha(P,\zeta)=d\alpha(P,\zeta).
\end{equation}
\end{proof}

\begin{cor}
Assume that $P$ is $d\alpha$--orthogonal to the week Noether symmetry field $\zeta$ restricted to $\mathcal M$. Then $J\vert_\mathcal M$
is a first integral of system \eqref{ZR}.
\end{cor}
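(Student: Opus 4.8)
The plan is to read the result off directly from Theorem \ref{druga}, since the corollary is precisely the special case in which the right-hand side of that theorem vanishes identically on $\mathcal M$. First I would make the hypothesis explicit: to say that $P$ is $d\alpha$--orthogonal to $\zeta$ along $\mathcal M$ means, by definition, the pointwise vanishing of the nondegenerate 2-form $d\alpha$ on the pair $(P,\zeta)$, that is $d\alpha(P,\zeta)\vert_\mathcal M=0$. I would also carry over the standing assumption of Theorem \ref{druga}—that $\zeta$ is a weak Noether symmetry satisfying \eqref{wns1} restricted to $\mathcal M$—so that the Noether function $J=i_\zeta(\alpha+\beta)-f$ and the identity \eqref{izvod2}, namely $i_\zeta d\alpha=-dJ$, remain available on $\mathcal M$.

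The single substantive step is then to invoke Theorem \ref{druga}, which yields
\[
\frac{d}{dt}J\big\vert_\mathcal M=d\alpha(P,\zeta)\vert_\mathcal M .
\]
Substituting the orthogonality hypothesis $d\alpha(P,\zeta)\vert_\mathcal M=0$ shows that $L_{Z+P}J$ vanishes on $\mathcal M$. Since $\mathcal M$ is by assumption an invariant submanifold of \eqref{ZR}, the flow of $Z+P$ remains on $\mathcal M$, so $J\vert_\mathcal M$ is constant along every trajectory—which is exactly the assertion that $J\vert_\mathcal M$ is a first integral of \eqref{ZR}.

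I do not anticipate a genuine obstacle, as the corollary is an immediate specialization of the preceding theorem. The only point deserving care is conceptual rather than computational: one must note that the orthogonality is demanded only along $\mathcal M$, not on all of $M$, which suffices precisely because the restricted dynamics already lives on $\mathcal M$, and that the weak-symmetry condition \eqref{wns1} likewise needs to hold only on $\mathcal M$ in order for $dJ$ to be computed there through \eqref{izvod2}.
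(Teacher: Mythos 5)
Your proof is correct and follows exactly the route the paper intends: the corollary is stated without a separate proof precisely because it is the specialization of Theorem \ref{druga} to the case $d\alpha(P,\zeta)\vert_{\mathcal M}=0$, which is what you carry out. Your added remarks---that \eqref{wns1} and the orthogonality need only hold on the invariant submanifold $\mathcal M$---are accurate and consistent with the paper's setup.
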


Next, from the proof of Theorem \ref{druga}, we see that we can relax the assumption that the vector field $\zeta$ is a week Noether symmetry.

\begin{thm}\label{treca}
Assume that $\zeta$ is a vector field and $\gamma$ is a 1-form satisfying
\begin{align}
\label{wns2}  & L_\zeta(\alpha+\beta)\vert_\mathcal M=df+\gamma\vert_\mathcal M,\\
\label{uslov} & d\alpha(P,\zeta)+i_{Z+P}\gamma=0\vert_\mathcal M.
\end{align}
Then the Noether function \eqref{NetInt} is preserved along the flow of \eqref{ZR}.
\end{thm}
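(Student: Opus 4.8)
The plan is to mirror the computation in the proof of Theorem \ref{druga}, carrying the extra 1-form $\gamma$ through each step; the whole argument reduces to expressing $dJ$ cleanly on $\mathcal M$ and then contracting with the dynamical vector field $Z+P$.

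First I would apply Cartan's formula \eqref{CartanF} to the hypothesis \eqref{wns2}. Since $\beta$ is closed we have $d(\alpha+\beta)=d\alpha$, so $L_\zeta(\alpha+\beta)=i_\zeta d\alpha+d\bigl(i_\zeta(\alpha+\beta)\bigr)$, and comparing with \eqref{wns2} gives, on $\mathcal M$,
\[
i_\zeta d\alpha = -d\bigl(i_\zeta(\alpha+\beta)-f\bigr)+\gamma = -dJ+\gamma.
\]
This is exactly the analogue of \eqref{izvod2}, now with the residual term $\gamma$ retained rather than discarded.

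Next I would differentiate $J$ along the flow of \eqref{ZR}. Because $\mathcal M$ is invariant, the vector field $Z+P$ is tangent to $\mathcal M$ and the flow stays on $\mathcal M$, so $\frac{d}{dt}J=L_{Z+P}J=i_{Z+P}\,dJ$ makes sense after restriction. Substituting the identity above yields
\[
\frac{d}{dt}J\big\vert_\mathcal M = i_{Z+P}\bigl(-i_\zeta d\alpha+\gamma\bigr)\big\vert_\mathcal M = \bigl(d\alpha(Z+P,\zeta)+i_{Z+P}\gamma\bigr)\big\vert_\mathcal M.
\]
Since $Z$ is a section of $\mathcal L=\ker d\alpha$ we have $d\alpha(Z,\zeta)=0$, so the right-hand side collapses to $d\alpha(P,\zeta)+i_{Z+P}\gamma$, which vanishes on $\mathcal M$ precisely by the compatibility condition \eqref{uslov}. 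Hence $J\vert_\mathcal M$ is conserved.

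I do not expect a genuine obstacle: the argument is the bookkeeping variant of Theorem \ref{druga}, where relaxing $L_\zeta(\alpha+\beta)=df$ to $df+\gamma$ simply propagates an extra $i_{Z+P}\gamma$ into the final expression, and \eqref{uslov} is engineered exactly to cancel it against $d\alpha(P,\zeta)$. The only point demanding care is that every identity is asserted after restriction to $\mathcal M$, so one must confirm that contracting with $Z+P$ and restricting commute with the steps above; this is guaranteed by the invariance of $\mathcal M$.
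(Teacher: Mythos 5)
Your proof is correct and follows essentially the same route as the paper: Cartan's formula applied to \eqref{wns2} yields $i_\zeta d\alpha=-dJ+\gamma$ on $\mathcal M$, and contracting with $Z+P$ and using $d\alpha(Z,\zeta)=0$ reduces $L_{Z+P}J$ to the combination $d\alpha(P,\zeta)+i_{Z+P}\gamma$, which vanishes by \eqref{uslov}. Your additional remark that the invariance of $\mathcal M$ justifies restricting all identities before contracting with $Z+P$ is a point the paper leaves implicit, but it does not change the argument.
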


\begin{proof}
Now, from the definition \eqref{wns2} and Cartan's formula we get
\begin{equation*}\label{izvod3}
i_\zeta d\alpha=-d(i_\zeta(\alpha+\beta))+df+\gamma=-dJ+\gamma.
\end{equation*}

Therefore
\begin{equation*}\label{ZJ2}
L_{Z+P}J=i_{Z+P}(-i_\zeta d\alpha)+i_{Z+P}\gamma=d\alpha(P,\zeta)+i_{Z+P}\gamma,
\end{equation*}
which proves the statement.
\end{proof}

Note that for $P=0$, $\mathcal M=M$, Theorem \ref{treca} implies the following variant of Theorem \ref{prva}:
if $\zeta$ satisfies \eqref{wns2}, where a 1-form $\gamma$ annihilates the line bundle $\mathcal L$,
then the Noether function \eqref{NetInt} is a first integral of
\eqref{clb}.

\section{Time-dependent systems with constraints and nonconservative forces}\label{S3}

\subsection{Equations} Consider a Lagrangian system $(Q,L,F)$, where $Q$ is a configuration
space, $L(t,q,\dot q)$ is a time-dependent Lagrangian, $L: \mathbb R\times
TQ\to \R$, and $F$ is a non-conservative force.
Assume that the motion of the system is subjected to $s$, in general time-dependent, independent ideal holonomic
constraints
\begin{equation}\label{constraints0}
f^l(t,q)=0, \qquad l=1,\dots,s,
\end{equation}
defining a $(n-s)$-dimensional time-dependent constraint submanifold $\Sigma_t\subset Q$.
Therefore, the velocities of the system satisfy the constraints
\begin{equation}
\label{constraints1}
a_0^l(t,q)+a_1^l(t,q)\dot q_1+\dots+a_n^l(t,q)\dot q_n=0, \qquad l=1,\dots,s.
\end{equation}
where
\[
a_0^l(t,q)=\frac{\partial f^l}{\partial t}(t,q), \quad a_i^l(t,q)=\frac{\partial f^l}{\partial q_i}(t,q), \quad i=1,\dots,n,
\]
and $q=(q_1,\dots,q_n)$ are local coordinates on $Q$.\footnote{As an example, we can take $Q$ to be the configuration space $\R^{3N}$ of $N$ free material points, see \cite{Ar, KK}. }

In addition, suppose $k-s$ additional
independent ideal (nonholonomic) constraints are given
\begin{equation}
\label{constraints}
a_0^l(t,q)+a_1^l(t,q)\dot q_1+\dots+a_n^l(t,q)\dot q_n=0, \qquad l=s+1,\dots,k.
\end{equation}
As a result, the velocities of the system
belong to a $(n-k)$-dimensional time-dependent affine distribution
\[
\mathcal D_t\subset T_{\Sigma_t}Q.
\]

Together with $\mathcal D_t$, we consider the associated
$(n-k)$--dimensional \emph{distribution of virtual displacements} $\mathcal D_t^0\subset T\Sigma_t\subset T_{\Sigma_t}Q$ defined by the homogeneous equations
\begin{equation}
\label{virtual}
\sum_{i=1}^n a_i^l(t,q)\xi_i=0, \qquad l=1,\dots,k.
\end{equation}

The motion of the system on the constrained space
\begin{equation}\label{CM}
\mathcal D=\{(t,\mathcal D_t)\,\vert\, t\in\R\}\subset \R\times TQ
\end{equation}
is described by the Euler--Lagrange--d'Alembert
equations
\begin{equation} \label{Lagrange1}
\sum_{i=1}^n \big(\frac{d}{dt}\frac{\partial L}{\partial \dot q_i}-\frac{\partial
L}{\partial q_i}-F_i\big)\xi_i=0,
\end{equation}
for all time-dependent vector fields $\xi=\sum_i\xi_i \partial/\partial q_i$, which satisfy the homogeneous constraints \eqref{virtual} (so called \emph{virtual displacements}).

Equivalently, equations \eqref{Lagrange1} can be rewritten in the form
\begin{equation} \label{Lagrange2}
\frac{d}{dt}\frac{\partial L}{\partial \dot q_i}-\frac{\partial
L}{\partial q_i}=F_i+R_i, \qquad i=1,\dots,n.
\end{equation}

Here, $F_i$ and
\begin{equation}\label{reactionF}
R_i=\sum_{l=1}^k \lambda_l(t,q,\dot q) a_i(t,q)^l, \qquad i=1,\dots,n,
\end{equation}
are the components of the nonconservative and the reaction force, respectively.
The Lagrange multipliers $\lambda_l=\lambda_l(t,q,\dot q)$ are determined from the condition that a motion $q(t)$ satisfy the constraints
\eqref{constraints0}, \eqref{constraints1}, \eqref{constraints}. We skip a discussion on the existence and the uniqueness of the Lagrange multipliers.

Let  $\mathbb FL_t: TQ\to T^*Q $ be the \emph{Legendre transformation}
\begin{equation}\label{legendre}
\mathbb FL_t(t,q,\xi)\cdot \eta=
\frac{d}{ds}\vert_{s=0}L(t,q,\xi+s\eta) \quad \Longleftrightarrow
\quad p_i=\frac{\partial L}{\partial \dot q_i}, \quad i=1,\dots,
n,
\end{equation}
where $\xi,\eta\in T_q Q$ and $(q_1,\dots,q_n, p_1,\dots,p_n)$ are
canonical coordinates of the cotangent bundle $T^*Q$.
In order to
have a Hamiltonian description of the dynamics we suppose that the
Legendre transformation \eqref{legendre} is a diffeomorphism.

Let
\[
\mathcal M=\{(t,\mathcal M_t)\,\vert\, t\in\R\}\subset \R\times T^*Q, \qquad \mathcal M_t\vert_q=\mathbb FL_t(\mathcal D_t\vert_q)\subset T^*_q Q
\]
be the constrained manifold \eqref{CM} within $\R\times T^*Q$. It is defined by the equations \eqref{constraints0} and
\begin{equation}
\label{constraints3}
a_0^l(t,q)+a_1^l(t,q)\frac{\partial H}{\partial p_1}+\dots+a_n^l(t,q)\frac{\partial H}{\partial p_n}=0, \qquad l=1,\dots,k.
\end{equation}

In the canonical coordinates
$(q,p)$ of the cotangent bundle $T^*Q$, the equations of motion
\eqref{Lagrange2} read:
\begin{equation} \label{1}
\dot q_i=\frac{\partial H}{\partial p_i},\qquad
\dot p_i=-\frac{\partial H}{\partial q_i}+F_i(t,q,p)+R_i(t,q,p), \qquad
i=1,\dots,n,
\end{equation}
where the Hamiltonian function $H(t,q,p)$ is the \emph{Legendre
transformation} of  $L$
\begin{equation} \label{haml}
 H(t,q,p)=\mathbb FL(t,q,\dot q)\cdot \dot q-L(t,q,\dot
q)\vert_{\dot q=\mathbb FL^{-1}(t,q,p)},
\end{equation}
and $F_i(t,q,p)=F_i(t,q,\dot q)\vert_{\dot q=\mathbb FL^{-1}(t,q,p)}$, $R_i(t,q,p)=R_i(t,q,\dot q)\vert_{\dot q=\mathbb FL^{-1}(t,q,p)}$.

In other words, on the constrained manifold $\mathcal M$ we have a system of the form \eqref{ZR}, where $Z$ is a section of the characteristic line bundle $\ker d(pdq-Hdt)$ given by \eqref{reb} and
the perturbation vector field is
\[
P=\sum_{i=1}^n \big(F_i(t,q,p)+R_i(t,q,p)\big)\frac{\partial}{\partial p_i}.
\]

\subsection{The reaction-annihilator distribution}
Let $(t,q,p)\in\mathcal M$.
Define distributions $\mathcal V$ and $\mathcal R$ of $T(\R\times T^*Q)$ at the points of $\mathcal M$
by
\begin{align*}
\mathcal V_{(t,q,p)} =&\big\{\zeta=\tau\frac{\partial
}{\partial t}+\sum_{i=1}^n{\xi_i\frac{\partial}{\partial
q_i}+\eta_i\frac{\partial}{\partial p_i}}\,\vert\,a_0^l(t,q)\tau+\sum_{i=1}^n a_i^l(t,q)\xi_i=0, \,l=1,\dots,k\big\},\\
\mathcal R_{(t,q,p)} =&\big\{\zeta=\tau\frac{\partial
}{\partial t}+\sum_{i=1}^n{\xi_i\frac{\partial}{\partial
q_i}+\eta_i\frac{\partial}{\partial p_i}}\,\vert\,\sum_{i=1}^n R_i(t,q,p)\big(\xi_i-\frac{\partial H}{\partial p_i}\tau\big)=0\big\}.
\end{align*}

We refer to
$\mathcal V$ as a \emph{admissible distribution} over $\mathcal M$, since the velocity of a curve $(t,q(t),p(t))$
belongs to $\mathcal V$ if and only if $q(t)$ satisfies the constraints \eqref{constraints0}, \eqref{constraints}.
The rank of $\mathcal V$ is $2n+1-k$. On the other hand,
the distribution $\mathcal R$ need not be of a constant rank: $\mathcal R_{(t,q,p)}$ is either a hyperplane or a whole tangent space $T_{(t,q,p)}\R\times T^*Q$, if reaction forces vanish at $(t,q,p)$.

Following \cite{FRS}, we call $\mathcal R$ a \emph{reaction-annihilator distribution} over $\mathcal M$ (see Section \ref{EXMP} below).

\begin{lem}\label{podskup} We have the inclusion $\mathcal V\subset \mathcal R$.
\end{lem}

\begin{proof}
Let $\zeta\in \mathcal V_{(t,q,p)}$. Then, from \eqref{reactionF} we have
\[
\sum_{i=1}^n R_i\xi_i=\sum_{i=1}^n\sum_{l=1}^k \lambda_l a_i^l \xi_i=-\tau\sum_{l=1}^k \lambda_l a_0^l.
\]
On the other hand, from \eqref{constraints3} we get
\begin{equation}\label{lepa}
\sum_{i=1}^n R_i\frac{\partial H}{\partial p_i}=\sum_{i=1}^n\sum_{l=1}^k \lambda_l a_i^l \frac{\partial H}{\partial p_i}=-\sum_{l=1}^k \lambda_l a_0^l,
\end{equation}
which implies $\zeta\in\mathcal R_{(t,q,p)}$.
\end{proof}

Let
\[
\zeta=\tau(t,q,p)\frac{\partial
}{\partial t}+\sum_i\xi_i(t,q,p)\frac{\partial}{\partial
q_i}+\eta_i(t,q,p)\frac{\partial}{\partial p_i}
\]
 be a {weak Noether symmetry} of the Hamiltonian system with the Hamiltoian \eqref{haml} restricted to $\mathcal M$:
\begin{equation}\label{wns}
L_{\zeta}(pdq-Hdt+\beta)=df\vert_\mathcal M,
\end{equation}
with respect to a closed 1-form $\beta$ and a smooth function $f$ in the extended phase space $\R\times T^*Q$.

From Theorem \ref{druga} we get.

\begin{thm}\label{cetvrta}
{\rm (i)} The derivative of
\[
J=i_\zeta(pdq-Hdt+\beta)-f=\sum_i p_i\xi_i-H\tau+\beta(\zeta)-f
\]
along the flow of \eqref{1} equals
\[
\frac{dJ}{dt}\big\vert_\mathcal M=\sum_{i=1}^n (F_i+R_i)(\xi_i-\dot q_i\tau)\vert_\mathcal M.
\]

{\rm (ii)} If $\zeta\vert_\mathcal M$ is a section of the admissible distribution $\mathcal V$, or more generally, a section of $\mathcal R$,
the derivative of the Noether function is given by
\[
\frac{dJ}{dt}\big\vert_\mathcal M=\sum_{i=1}^n F_i(\xi_i-\dot q_i\tau)\vert_\mathcal M.
\]
In particular, if $F\equiv 0$, $J$ is preserved along the flow of \eqref{1} if and only if $\zeta$ is a section of $\mathcal R$.
\end{thm}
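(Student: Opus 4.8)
The plan is to apply Theorem~\ref{druga} directly, since the weak Noether symmetry hypothesis \eqref{wns} is precisely condition \eqref{wns1} specialized to $\alpha=pdq-Hdt$ and restricted to $\mathcal M$. Theorem~\ref{druga} then gives $\frac{d}{dt}J\vert_\mathcal M = d\alpha(P,\zeta)\vert_\mathcal M$, so the entire problem reduces to evaluating the single two-form pairing $d\alpha(P,\zeta)$ and reading off its mechanical meaning.

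First I would write $d\alpha = \sum_i dp_i\wedge dq_i - dH\wedge dt$, expand $dH = \frac{\partial H}{\partial t}dt + \sum_i\big(\frac{\partial H}{\partial q_i}dq_i + \frac{\partial H}{\partial p_i}dp_i\big)$, and recall that $P=\sum_i(F_i+R_i)\partial/\partial p_i$ has components only along the $\partial/\partial p_i$ directions. Contracting $P$ into $d\alpha$ annihilates every term lacking a $dp_i$ factor, so only $\sum_i dp_i\wedge dq_i$ and the single summand $-\sum_i\frac{\partial H}{\partial p_i}dp_i\wedge dt$ from $-dH\wedge dt$ survive. Pairing $i_P d\alpha$ with $\zeta$ and using $dq_i(\zeta)=\xi_i$, $dt(\zeta)=\tau$ yields
\[
d\alpha(P,\zeta)=\sum_{i=1}^n (F_i+R_i)\Big(\xi_i - \frac{\partial H}{\partial p_i}\tau\Big).
\]
Since along the flow of \eqref{1} one has $\dot q_i = \partial H/\partial p_i$, substituting this on-shell identity produces exactly the expression claimed in part (i).

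For part (ii), I would invoke the definition of the reaction-annihilator distribution: a vector field $\zeta$ lies in $\mathcal R_{(t,q,p)}$ precisely when $\sum_i R_i\big(\xi_i - \frac{\partial H}{\partial p_i}\tau\big)=0$. Hence if $\zeta\vert_\mathcal M$ is a section of $\mathcal R$, the reaction contribution to the sum in (i) vanishes identically, leaving only the nonconservative part $\sum_i F_i(\xi_i-\dot q_i\tau)$; the more special case of $\zeta$ a section of $\mathcal V$ then follows immediately from the inclusion $\mathcal V\subset\mathcal R$ of Lemma~\ref{podskup}. When $F\equiv 0$ the derivative reduces to $\sum_i R_i(\xi_i-\dot q_i\tau)\vert_\mathcal M$, which by the very definition of $\mathcal R$ vanishes if and only if $\zeta$ is a section of $\mathcal R$, establishing the stated equivalence.

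The argument is essentially routine once Theorem~\ref{druga} is available; the only step demanding care is the bookkeeping in contracting $P$ into $d\alpha$—tracking which wedge terms survive and the sign carried by $-dH\wedge dt$—together with the trivial but essential observation that $\partial H/\partial p_i$ coincides with $\dot q_i$ on-shell, which is what converts the intrinsic two-form pairing into the mechanical expression involving the velocities.
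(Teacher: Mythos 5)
Your proposal is correct and follows essentially the same route as the paper's proof: both reduce the statement to Theorem~\ref{druga} and evaluate $d\alpha(P,\zeta)$ using the fact that $P$ has only $\partial/\partial p_i$ components (so $dq_i(P)=dt(P)=0$), substitute $\dot q_i=\partial H/\partial p_i$ on shell, and obtain part (ii) from the definition of $\mathcal R$ together with the inclusion $\mathcal V\subset\mathcal R$ of Lemma~\ref{podskup}. Your spelling out of the ``if and only if'' when $F\equiv 0$ is slightly more explicit than the paper's one-line appeal to item (i) and the lemma, but the argument is identical in substance.
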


\begin{proof}
(i) We have
\begin{align*}
d\alpha(P,\zeta)=&(dp\wedge dq-dH\wedge dt)(P,\zeta)=\sum_{i=1}^n dp_i(P)dq_i(\zeta)-dH(P)dt(\zeta)\\
=&\sum_{i=1}^n (F_i+R_i)\big(\xi_i-\tau \frac{\partial H}{\partial p_i}\big)=\sum_{i=1}^n (F_i+R_i)(\xi_i-\dot q_i\tau),
\end{align*}
where we used $dq_i(P)=dt(P)=0$.

\medskip

(ii) The proof follows directly from item (i) and Lemma \ref{podskup}.
\end{proof}

\subsection{Classical Noether theorem and prolongations of time-depended vector fields} \label{CNTP}
In the classical Noether theorem, without constraints and nonconservative forces, one considers the invariance of the action functional
\begin{equation}\label{action2}
A=\int_{t_0}^{t_1} L(t,q(t),\dot q(t))dt
\end{equation}
under the transformations induced by the $\mathbb R\times TQ$--prolongation (e.g., see \cite{SC1})
\begin{equation}\label{prol}
\hat\xi_{R\times TQ}=\tau\frac{\partial }{\partial
t}+\sum_i \xi_i\frac{\partial}{\partial q_i}+ \nu_i  \frac{\partial}{\partial \dot q_i}, \,\, \nu_i=\frac{\partial
\xi_i}{\partial t}-\dot q_i\frac{\partial \tau}{\partial t}+\sum_j\big(\frac{\partial \xi_i}{\partial q_j}\dot
q_j-\dot q_i\frac{\partial
\tau}{\partial q_j}\dot q_j\big)
\end{equation}
of a time-dependent vector field
\begin{equation}\label{symmetry}
\hat\xi=\tau\frac{\partial}{\partial t}+\xi=\tau(t,q)\frac{\partial}{\partial t}+\sum_i \xi_i(t,q)\frac{\partial}{\partial q_i}.
\end{equation}

The action \eqref{action2} is preserved if the Lagrangian satisfies the invariance condition:
\begin{align}
\label{uslovL}
\sum_i{\Big(\frac{\partial L}{\partial
q_i}\xi_i+\frac{\partial L}{\partial \dot
q_i}\nu_i\Big)}+\frac{\partial L}{\partial t}\tau+
L \big(\frac{\partial \tau}{\partial
t}+\sum_j\frac{\partial \tau}{\partial q_j}\dot q_j \big)=0,
\end{align}
and then
\begin{equation}\label{JL}
J(t,q,\dot q)=\frac{\partial L}{\partial \dot q}(\xi-\tau\dot
q)+L\tau=\sum_i{\frac{\partial L}{\partial \dot q_i}(\xi_i-\tau\dot
q_i)}+L\tau
\end{equation}
is a first integral of the Euler-Lagrange equations (e.g., see \cite{SC1}).

On the Hamiltonian side, firstly we need to take some natural prolongation of \eqref{symmetry} to $\R\times T^*Q$.
We define the prolongation
\begin{equation}\label{prolongation}
\zeta=\tau\frac{\partial}{\partial t}+\sum_i\xi_i\frac{\partial}{\partial q_i}+
\sum_{i}\big(H\frac{\partial\tau}{\partial q_i}-\sum_j\frac{\partial\xi_j}{\partial q_i}p_j\big)\frac{\partial}{\partial p_i}
\end{equation}
from the condition that the Lie derivative $L_\zeta(pdq-Hdt)$ is proportional to the one-form $dt$ (see Proposition 2.1 \cite{Jo1}).
Then $\zeta$ is a Noether symmetry of the Poincar\'e--Cartan 1-form
\begin{equation*}\label{ns}
L_\zeta(pdq-Hdt)=0
\end{equation*}
if and only if the following invariance condition is satisfied (see the proof of Proposition 2.2, \cite{Jo1}):
\begin{equation}\label{ns3}
L_\zeta H=p\frac{\partial\xi}{\partial t}- H\frac{\partial\tau}{\partial t}=\sum_{i=1}^n p_i\frac{\partial\xi_i}{\partial t}- H\frac{\partial\tau}{\partial t}.
\end{equation}

Moreover, under the Legendre transformation, the invariance conditions \eqref{uslovL} and \eqref{ns3} are equivalent, and the Noether integral
\begin{equation}\label{netIntPro}
J(t,q,p)=i_\zeta(pdq-Hdt)=\sum_{j=1}^n \xi_j(q,t)p_j-\tau(t,q)H(t,q,p)
\end{equation}
takes the usual form \eqref{JL} (Proposition 2.2 \cite{Jo1}).


Obviously, the first integrals of many classical problems are not of the form \eqref{JL}.
In order to extend the class of examples of Noether integrals,
the gauge terms and some modifications of the action \eqref{action2} are considered (see the references in \cite{SC1}).
Note that a function $f(t,q,p)$ in our definition of a week Noether symmetry plays the role of a gauge term
in the classical formulation.

The next natural step was the extending the class of symmetries.
The framework where the components of $\hat\xi$ additionally depend of velocities was introduced in \cite{D, DV},
and then a geometrical setting
for the equivalence of the first integrals and Noether symmetries of the Lagrangian given by vector fields on $\R\times TQ$ was formulated
in \cite{SC1, Cr}.


In our notation, if $\alpha$ is a contact form and $F$ is the integral of \eqref{clb},
we have the inverse Noether theorem directly, without using the gauge terms:
the contact Hamiltonian vector field of the function $F$ defines the Noether symmetry $\zeta$ of the equation \eqref{clb} with the Noether integral $F=i_\zeta\alpha$.

For mechanical problems, the {Poincar\'e--Cartan} 1-form is contact in a domain $U_H\subset \R\times T^*Q$ defined by the condition
\[
\rho=i_Z(pdq-Hdt)=p\frac{\partial H}{\partial p}-H\ne 0,
\]
where $Z$ is given by \eqref{reb}.
Thus, on $U_H$ we have a simple explicit expression
for the Noether symmetry (see Theorem 4.1 and examples given in  \cite{Jo1}). In this sense, the form $\beta$ should be taken such that
$pdq-Hdt+\beta$ is a contact form on $\R\times T^*Q$, i.e, $\rho+i_Z\beta\ne 0$. We left the terms $f$ and $\beta$ in the formulation of Theorem \ref{cetvrta}, although we do not use them in the examples given below.

\subsection{Noether theorem for prolonged vector fields} Now we return to the constrained system \eqref{1}.
Let
\[
\Sigma=\{(t,\Sigma_t)\vert\,t\in\R\}\subset \R\times Q
\]
be a $(n-s+1)$--dimensional submanifold of $\R\times Q$ defined by the holonomic constraints \eqref{constraints0}.
Consider a time-dependent vector field \eqref{symmetry} and its prolongation \eqref{prolongation}.
Since $\hat\xi$ does not depend on the momenta, the conditions on $\zeta\vert_\mathcal M$ to be a section of the admissible distribution $\mathcal V$, or a section of the reaction-annihilator distribution $\mathcal R$, reduce to the conditions on the vector field $\hat\xi\vert_\Sigma$.

It is clear that $\zeta\vert_\mathcal M$ belongs to $\mathcal V$ if and only if $\hat\xi\vert_\Sigma$ belongs to
a  $(n+1-k)$--dimensional distribution $\hat{\mathcal V}$ over $\Sigma$:
\[
\hat{\mathcal V}=\big\{\hat\xi\in T_{(t,q)}\Sigma\, \vert\,
a_0^l\tau+\sum_{i=1}^n a_i^l\xi_i=0, \, l=1,\dots,k,\, (t,q)\in\Sigma\big\}.
\]

Since the first $s$ equations in the definition of $\hat{\mathcal V}$ determine the tangent bundle $T\Sigma$,
we have
\[
\hat{\mathcal V}\subset T\Sigma\subset T_\Sigma(\R\times Q).
\]

On the other hand, by using the expressions for $R_i$ given by \eqref{reactionF} and relation \eqref{lepa}, $\zeta\vert_\mathcal M$ is a section of $\mathcal R$, if and only if
\begin{equation}\label{anh2}
\sum_{i=1}^n R_i\big(\xi_i-\tau\frac{\partial H}{\partial p_i}\big)=\sum_{l=1}^k \lambda_l(t,q,p)\big(a_0^l\tau +\sum_{i=1}^n a_i^l \xi_i\big)=0\vert_\mathcal M.
\end{equation}

That is, $\hat\xi\vert_{(t,x)}$, $(t,q)\in\Sigma$,
belongs to the subspace
$\hat{\mathcal R}_{t,q}$ of $T_{(t,q)}(\R\times Q)$
determined by equations \eqref{anh2} for all momenta $p$ in the space
\begin{equation}\label{lepa2}
\mathcal M_t\vert_q=\mathbb FL_t(\mathcal D_t\vert_q).
\end{equation}

Let $\hat{\mathcal R}$ be the collection of spaces $\hat{\mathcal R}_{t,q}$, $(t,q)\in\Sigma$.
We refer to $\hat{\mathcal V}$ and $\hat{\mathcal R}$ as a \emph{admissible} and a
\emph{reaction-annihilator} distribution over $\Sigma$, respectively. Obviously,
\[
\hat{\mathcal V}\subset \hat{\mathcal R}\subset T_\Sigma (\R\times Q).
\]

From Theorem \ref{cetvrta}, we have:

\begin{thm}\label{cetvrta*}
Let \eqref{symmetry} be a symmetry of the Hamiltonian: $H$ satisfies \eqref{ns3} on $\mathcal M$, where
 $\zeta$ is the prolongation of \eqref{symmetry} given by \eqref{prolongation}.\footnote{According to Proposition 2.2 \cite{Jo1}, this is
 equivalent to the assumption that the Lagrangian $L$ satisfies  \eqref{uslovL} at the constrained manifold $\mathcal D\subset \R\times TQ$. Then we have the following analogous statement in the Lagrangian setting: if $F\equiv 0$,
 the Noether function \eqref{JL} is conserved along the flow of the system \eqref{Lagrange2} if and only if
$\hat{\xi}\vert_{\Sigma}$ is a section of $\hat{\mathcal R}$.}
Then the derivative of the Noether function equals
\begin{equation}\label{momentum2}
\frac{d}{dt}J\vert_\mathcal M=\frac{d}{dt}\big(\sum_{j=1}^n\xi_j(q,t)p_j-\tau(t,q)H(t,q,p)\big)\vert_\mathcal M=\sum_{i=1}^n F_i(\xi_i-\dot q_i\tau)\vert_\mathcal M
\end{equation}
if and only if $\hat{\xi}\vert_{\Sigma}$ is a section of the reaction-annihilator distribution $\hat{\mathcal R}$.
Thus, for $F\equiv 0$, the Noether function \eqref{JL} is preserved if and only if $\hat\xi\vert_{\Sigma}$ is a section of
$\hat{\mathcal R}$.
In particular, $J$ is an integral   if $\hat\xi\vert_{\Sigma}$ is a section of the admissible distribution  $\hat{\mathcal V}$.
\end{thm}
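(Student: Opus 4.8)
The plan is to obtain the whole statement as a specialization of Theorem \ref{cetvrta} combined with the dictionary between the momentum-space distributions $\mathcal V,\mathcal R$ over $\mathcal M$ and the configuration-space distributions $\hat{\mathcal V},\hat{\mathcal R}$ over $\Sigma$ set up in the paragraphs preceding the statement. First I would record that the hypothesis \eqref{ns3} on $\mathcal M$ is, by the equivalence recalled from Proposition 2.2 of \cite{Jo1}, exactly $L_\zeta(pdq-Hdt)\vert_\mathcal M=0$. Hence $\zeta$ is a weak Noether symmetry in the sense of \eqref{wns} with $\beta=0$ and $f=0$, and its Noether function reduces to $J=i_\zeta(pdq-Hdt)=\sum_j\xi_j p_j-\tau H$, which is precisely \eqref{netIntPro} as written in \eqref{momentum2}. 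This checks that the object whose derivative we compute is genuinely the Noether function of Theorem \ref{cetvrta}.

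With this identification I would apply Theorem \ref{cetvrta}(i) verbatim to get $\frac{d}{dt}J\vert_\mathcal M=\sum_i(F_i+R_i)(\xi_i-\dot q_i\tau)\vert_\mathcal M$. Consequently the target identity \eqref{momentum2} holds if and only if the reaction contribution vanishes, i.e. $\sum_i R_i(\xi_i-\dot q_i\tau)\vert_\mathcal M=0$. But this is exactly display \eqref{anh2}, which by construction is the condition that $\zeta\vert_\mathcal M$ be a section of $\mathcal R$. So \eqref{momentum2} is equivalent to $\zeta\vert_\mathcal M\in\mathcal R$, and the remaining work is purely the translation of this from $\mathcal M$ down to $\Sigma$.

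For that translation I would use that the prolongation \eqref{prolongation} has $t$- and $q$-components $\tau(t,q)$ and $\xi_i(t,q)$ independent of the momenta, so that the left-hand side of \eqref{anh2} depends on $p$ only through the multipliers $\lambda_l(t,q,p)$ and the Legendre image. Imposing \eqref{anh2} at every point of $\mathcal M$ therefore amounts, for each fixed $(t,q)\in\Sigma$, to imposing it for every $p\in\mathcal M_t\vert_q=\mathbb FL_t(\mathcal D_t\vert_q)$ as in \eqref{lepa2}; this is precisely the defining condition of $\hat{\mathcal R}_{t,q}$. Hence $\zeta\vert_\mathcal M$ is a section of $\mathcal R$ if and only if $\hat\xi\vert_\Sigma$ is a section of $\hat{\mathcal R}$, and combining with the previous paragraph yields the main equivalence. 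The step I expect to be the main obstacle is handling the universal quantifier over $p$ correctly: one must verify that section-hood of $\zeta$ over all of $\mathcal M$ is equivalent to the fibrewise ``for all $p\in\mathcal M_t\vert_q$'' condition built into $\hat{\mathcal R}$, rather than to some weaker pointwise matching, which uses that $\mathcal M$ fibres over $\Sigma$ with fibre $\mathcal M_t\vert_q$ and that $\hat\xi$ is constant along these fibres.

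Finally I would specialize to obtain the last two assertions. When $F\equiv0$ the right-hand side of \eqref{momentum2} vanishes identically, so $J$ is conserved along \eqref{1} exactly when \eqref{momentum2} holds, i.e. exactly when $\hat\xi\vert_\Sigma$ is a section of $\hat{\mathcal R}$. Since the inclusion $\hat{\mathcal V}\subset\hat{\mathcal R}$ was established just above the statement, any $\hat\xi\vert_\Sigma$ that is a section of $\hat{\mathcal V}$ is a fortiori a section of $\hat{\mathcal R}$, which gives the stated sufficient condition for $J$ to be a first integral of \eqref{1}.
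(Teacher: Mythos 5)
Your proposal is correct and takes essentially the same route as the paper, which deduces Theorem \ref{cetvrta*} directly from Theorem \ref{cetvrta}(i) (so that \eqref{momentum2} holds iff the reaction term $\sum_i R_i(\xi_i-\dot q_i\tau)$ vanishes on $\mathcal M$, i.e.\ iff \eqref{anh2} holds) together with the dictionary set up just before the statement identifying sections of $\mathcal R$ over $\mathcal M$ with sections of $\hat{\mathcal R}$ over $\Sigma$ via the fibrewise quantification over $p\in\mathcal M_t\vert_q$ in \eqref{lepa2}, and the inclusion $\hat{\mathcal V}\subset\hat{\mathcal R}$ for the final claim. The only points you spell out more explicitly than the paper --- the specialization $\beta=0$, $f=0$ in \eqref{wns} so that \eqref{ns3} on $\mathcal M$ yields a weak Noether symmetry, and the verification that the momentum-independence of $\tau,\xi$ makes the ``for all $p$'' condition the right translation --- are exactly what the paper leaves implicit.
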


Note that the left hand side of \eqref{momentum2} simplifies if $F$ is a gyroscopic force
\begin{equation}\label{GF}
\sum_{i=1}^n F_i\dot q_i=\sum_{i=1}^n F_i \frac{\partial H}{\partial p_i}=0.
\end{equation}

In the next section we shall consider two important special cases of Theorem \ref{cetvrta*}, which are variants of some
well known statements in nonholonomic mechanics.

\section{Examples}\label{EXMP}

\subsection{Nonholonomic Noether theorem}
If $\tau\equiv 0$ in the symmetry field \eqref{symmetry}, we have
a time-dependent vector field $\xi=\sum_j \xi_j(t,q)\partial/\partial q_j$ on $Q$ and the Noether function linear in momenta
\begin{equation}\label{linearInt}
J=\sum_{j=1}^n \xi_j(t,q)p_j.
\end{equation}

It is well known that the one-parameter group of diffeomorphisms $g_s^\xi$ of the vector field $\xi$ (for a fixed $t$)
have natural lifts to $TQ$ (used in considering the Noether symmetries in \cite{FRS}) and $T^*Q$.
A well known expression for the cotangent lift of $\xi$ is
\begin{equation}\label{linear}
\zeta=\sum_i\xi_i\frac{\partial}{\partial q_i}-
\sum_{i,j}\frac{\partial\xi_j}{\partial q_i}p_j\frac{\partial}{\partial p_i},
\end{equation}
which is the Hamiltonian vector field of the Noether function \eqref{linearInt} with respect to the canonical symplectic form $dp\wedge dq$ on $T^*Q$
(e.g., see Proposition 1.9, Ch IV, \cite{LM}).\footnote{Another way to see the prolongations of $\xi$ to $TQ$ and $T^*Q$ is simply the substitution of $\tau\equiv 0$ in \eqref{prol} and \eqref{prolongation}, respectively.}
Thus, the invariance condition \eqref{ns3} becomes
\begin{equation}\label{ns2}
L_\zeta H=p\frac{\partial\xi}{\partial t}=\sum_{i=1}^n p_i\frac{\partial\xi_i}{\partial t}.
\end{equation}

In particular, when $\xi$ does not depend on time, \eqref{ns2} reduces to the invariance of the Hamiltonian function with respect to the flow of \eqref{linear} on the cotangent bundle $T^*Q$,
or equivalently, the invariance of the Lagrangian $L$ under the action of the flow of $\xi$ that is extended to the tangent bundle $TQ$.
In that case, \eqref{linearInt} is the standard momentum map of the action of the symmetry field $\xi$ (see \cite{Ar, LM}).

For $\tau\equiv 0$, the equation \eqref{anh2} takes the form
\begin{equation}\label{anhilacija}
\sum_{i=1}^n R_i\xi_i=\sum_{i=1}^n\sum_{j=1}^k \lambda_l(t,q,p) a_i^l(t,q) \xi_i=0\vert_\mathcal M.
\end{equation}

As above, since $\xi$ does not depend on the momenta,
$\zeta\vert_\mathcal M$ belongs to $\mathcal R$ if and only if $\xi\vert_q$, $q\in {\Sigma_t}$, belongs to the subspace
$\mathcal R^0_{t,q}$ of $T_q Q$ defined by
equations \eqref{anhilacija} for all momenta $p$ in \eqref{lepa2}.
Let
\[
\mathcal R^0_t=\cup_{q\in\Sigma_t}\mathcal R^0_{t,q}\subset T_{\Sigma_t} Q.
\]

In the case of time-independent constrained Lagrangian systems, the distribution $\mathcal R^0_t$ is defined by Fass\`o, Ramos, and Sansonetto in \cite{FRS, FS}, where it is referred as a
\emph{reaction-annihilator distribution} (the sections of $\mathcal R^0_t$ annihilate the 1-form of reaction forces $\sum_i R_i dq_i$).
We keep the same notation.

The distribution of virtual displacements $\mathcal D_t^0$ defined by \eqref{virtual} is a subset of $\mathcal R_t^0$, and it takes the role
of the admissible distribution. Thus, we get

\begin{cor}\label{peta}
Let a time-dependent vector field $\xi$ on $Q$ be a symmetry of the Hamiltonian $H$: the invariance equation \eqref{ns2} is satisfied on $\mathcal M$, where
$\zeta$ is the prolongation \eqref{linear} of $\xi$. The momentum equation
\begin{equation}\label{momentum}
\frac{d}{dt}J\vert_\mathcal M=\frac{d}{dt}\big(\sum_{i=1}^n \xi_i(q,t)p_i\big)\vert_\mathcal M=\sum_{i=1}^n F_i\xi_i\vert_\mathcal M
\end{equation}
holds if and only if $\xi\vert_{\Sigma_t}$ is a section of the
reaction-annihilator distribution $\mathcal R_t^0$ for all $t$.
Thus, for $F\equiv 0$, the Noether function \eqref{linearInt} is a first integral
of the constrained system \eqref{1} if and only if $\xi\vert_{\Sigma_t}$ is a section of $\mathcal R_t^0$, $t\in\R$.
In particular,
it is an integral
if $\xi\vert_{\Sigma_t}$ is a section of the distribution of virtual displacements $\mathcal D_t^0$, $t\in\R$.
\end{cor}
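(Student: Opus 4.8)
The plan is to recognize Corollary \ref{peta} as precisely the specialization of Theorem \ref{cetvrta*} to the case $\tau\equiv 0$, and to check that every object occurring there degenerates to its counterpart in the corollary. Since Theorem \ref{cetvrta*} is already available, the work consists entirely of matching data and reformulating the reaction-annihilator condition.

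First I would verify the reductions of the ingredients. Setting $\tau\equiv 0$ in the prolongation \eqref{prolongation} makes the $\partial/\partial t$ component and the $H\partial\tau/\partial q_i$ terms drop out, leaving exactly the cotangent lift \eqref{linear}; thus the field $\zeta$ here coincides with the prolonged field of Theorem \ref{cetvrta*}. Correspondingly, the Hamiltonian symmetry condition \eqref{ns3} loses its $H\partial\tau/\partial t$ term and becomes \eqref{ns2}, the Noether function $\sum_j\xi_j p_j-\tau H$ becomes the momentum-linear function \eqref{linearInt}, and the right-hand side $\sum_i F_i(\xi_i-\dot q_i\tau)$ of \eqref{momentum2} becomes $\sum_i F_i\xi_i$, matching \eqref{momentum}. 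These are routine substitutions.

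Next I would match the reaction-annihilator conditions, which is the one step requiring genuine care. With $\tau\equiv 0$ the defining relation \eqref{anh2} of $\hat{\mathcal R}$ collapses to \eqref{anhilacija}. Because the lifted field $\hat\xi=\xi$ carries no $\partial/\partial t$ component and $\xi$ is independent of the momenta, requiring that $\hat\xi\vert_\Sigma$ be a section of $\hat{\mathcal R}$ over $\Sigma\subset\R\times Q$ amounts, slice by slice in $t$, to requiring that $\xi\vert_q$ lie in the subspace $\mathcal R^0_{t,q}\subset T_q Q$ cut out by \eqref{anhilacija} for every momentum $p\in\mathcal M_t\vert_q=\mathbb FL_t(\mathcal D_t\vert_q)$, i.e. that $\xi\vert_{\Sigma_t}$ be a section of $\mathcal R^0_t$ for all $t$. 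The main obstacle is exactly to confirm that this passage from the space--time distribution $\hat{\mathcal R}$ over $\Sigma$ to the purely spatial distributions $\mathcal R^0_t$ is faithful: that discarding the time direction and the $\tau$-dependent terms neither loses nor introduces conditions, and that the universal quantifier over $p\in\mathcal M_t\vert_q$ transfers correctly. Once this equivalence is in hand, Theorem \ref{cetvrta*} yields the momentum equation \eqref{momentum} together with its ``if and only if,'' and the $F\equiv 0$ statement follows immediately.

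Finally, the concluding ``in particular'' clause uses only the inclusion $\mathcal D^0_t\subset\mathcal R^0_t$ recorded just before the statement: a section $\xi\vert_{\Sigma_t}$ of the virtual-displacement distribution is automatically a section of $\mathcal R^0_t$, so $J$ is then a first integral of \eqref{1}. Everything beyond the distribution-matching step is direct specialization, so I expect the proof to be short once that identification is made explicit.
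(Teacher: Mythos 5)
Your proposal is correct and takes essentially the same route as the paper: the paper likewise obtains Corollary \ref{peta} by setting $\tau\equiv 0$ in Theorem \ref{cetvrta*}, noting in the text preceding the corollary that \eqref{prolongation} reduces to \eqref{linear}, \eqref{ns3} to \eqref{ns2}, and \eqref{anh2} to \eqref{anhilacija} quantified over all $p\in\mathcal M_t\vert_q$, and then invoking the inclusion $\mathcal D_t^0\subset\mathcal R_t^0$ for the final clause. The one step you single out as needing care---the slice-by-slice passage from $\hat{\mathcal R}$ over $\Sigma$ to the spatial distributions $\mathcal R^0_t$, using that $\xi$ is momentum-independent and has no $\partial/\partial t$ component---is exactly the reduction the paper carries out before stating the corollary.
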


This statement is referred as a \emph{nonholonomic Noether theorem}, usually formulated for the time-independent case and under the condition that $\xi$ is a symmetry of the nonconstrained system as well, induced by the action of an one-parameter subgroup of the Lie group $G$ of symmetries acting on the configuration space $Q$ (see \cite{Ar, BKMM}).

One of the first variants of the nonholonomic Noether theorem is given by Kozlov and Kolesnikov \cite{KK}. They considered a natural mechanical system of $N$ material points, $Q=\R^{3N}$, with symmetries that are infinitesimal isometries of $\R^{3N}$ with respect to the metric induced by the kinetic energy. For the rigid body examples, see, e.g., \cite{BM}. The Chaplygin rolling ball problem in $\R^n$ is an illustrative example of the reduction of symmetries of the preserved nonholonomic momentum map, see \cite{HN, Jo0}.

The situation where the function \eqref{linearInt} is not an integral of the nonconstrained system, while it is an
integral of the constrained one appears in the case of \emph{gauge symmetries} of nonholonomic systems studied in \cite{BS, BGMD, FRS, FGS, FS}. Corollary \ref{peta} for the Lagrangian systems $(Q,L)$ with $F\equiv 0$ and time-independent nonholonomic constraints \eqref{constraints} can be found in \cite{FRS, FGS} (the homogeneous case) and \cite{FS} (the affine case).

Thus, we can consider Theorems \ref{cetvrta}, \ref{cetvrta*} as a generalization of gauge symmetries to the time-dependent symmetries, as well as to the symmetries which are related to integrals that are not linear in momenta.

\begin{exm}\label{Example1} As an illustration,
consider the motion of a material point with the position vector $\mathbf r=(x,y,z)$, mass $m$ and electric charge $q$ in $\R^3$ under the influence of a
homogeneous gravitational field $\mathbf F=mg\mathbf k$ in the direction of a unit vector $\mathbf k=(k_x,k_y,k_z)$ and a magnetic field $\mathbf B=B\mathbf e_z=(0,0,B)$.
The Newtonian equations of motion are
\[
\dot{\mathbf p}=mg\mathbf k+\epsilon \dot{\mathbf r} \times \mathbf e_z,
\]
where $\epsilon=qB$, $\mathbf p=m\dot{\mathbf r}=(p_x,p_y,p_z)$. For the Hamiltonian we take the kinetic energy $H_0=\frac{1}{2m}(\mathbf p,\mathbf p)$ and treat the gravitational field as a nonconservative force.

Now, assume the motion  is subjected to a time-dependent nonholonomic constraint:
\[
a(t)y\dot x-\dot z+b(t)=0.
\]
The symmetry vector field $\xi=\mathbf e_y$ belongs to the distribution of virtual displacements. Thus, from the momentum equation
\[
\dot{p}_{y}=F_y+B_y=mg k_y-\epsilon \dot x,
\]
we get that
$p_{y}-mg k_yt+\epsilon x$ is an integral of the system.

 Consider the vector field $\xi=f \mathbf e_x+a(t)yf \mathbf e_z$, which also belongs to the distribution of virtual displacements. Let $\zeta$ be a prolongation of $\xi$:
\[
\zeta=f \mathbf e_x+a(t)yf \mathbf e_z-\big(\frac{\partial f}{\partial y}p_x+a(t)f p_z+a(t)y\frac{\partial f}{\partial y}p_z \big)\mathbf e_{p_y},
\]
for $f=f(t,y)$. The invariance condition \eqref{ns2} takes the form
\[
-\frac{1}{2m}\big( \frac{\partial f}{\partial y} p_xp_y+af p_yp_z+ay\frac{\partial f}{\partial y}p_yp_z    \big)=p_x \frac{\partial f}{\partial t}+p_z y
\big(   \frac{\partial a}{\partial t}f+a\frac{\partial f}{\partial t}  \big),
\]
and
for $b\equiv 0$,
under the condition
\[
p_z=a(t)y p_x,
\]
 we have a solution
\[
f(t,y)=1/\sqrt{1+a(t)^2y^2}
\]
($\xi$ is a time-dependent gauge symmetry).

Thus, the momentum equation reads
\[
\frac{d}{dt}\big(\frac{p_x}{\sqrt{1+a(t)^2y^2}}+\frac{a(t)y p_z}{\sqrt{1+a(t)^2y^2}}\big)\big\vert_{p_z=a(t)y p_x}=
mg\frac{k_x+a(t)y k_z}{\sqrt{1+a(t)^2 y^2}}+\epsilon\frac{\dot y}{\sqrt{1+a(t)^2 y^2}}
\]
and for $k_x=k_z=\epsilon=0$, we have the additional gauge integral
\[
\Big(\frac{p_x}{\sqrt{1+a(t)^2 y^2}}+\frac{a(t)y p_z}{\sqrt{1+a(t)^2 y^2}}\Big)\big\vert_{p_z=a(t)y p_x}\,.
\]
\end{exm}

\subsection{Conservation of energy}
As the next important case,  we suppose $\tau\equiv 1$ and that
there is no influence of nonconservative forces ($F\equiv 0$).
Consider a time-dependent vector field
\begin{equation}\label{moving0}
\hat\xi=\frac{\partial}{\partial t}+\xi=\frac{\partial}{\partial t}+\sum_i \xi_i(t,q)\frac{\partial}{\partial q_i}
\end{equation}
and its prolongation
 (see \eqref{prolongation}):
\begin{equation}\label{moving}
\zeta=\frac{\partial}{\partial t}+\sum_i\xi_i\frac{\partial}{\partial q_i}-
\sum_{i,j}\frac{\partial\xi_j}{\partial q^i}p_j\frac{\partial}{\partial p_i}.
\end{equation}

As in the previous subsection, it is a Noether symmetry if the invariance condition \eqref{ns2} holds (but now with $\zeta$ given by \eqref{moving} instead by \eqref{linear}) and
the associated Noether function takes the form
\begin{equation}\label{netIntPro2}
J(t,q,p)=i_\zeta(pdq-Hdt)=\sum_{j=1}^n \xi_j(q,t)p_j-H(t,q,p).
\end{equation}

It is well known that if the Hamiltonian $H$ does not depend
on time, then $\zeta=\partial/\partial t$ ($\xi\equiv 0$) is a Noether symmetry and the Hamiltonian multiplied by $-1$ ($J=-H$) is the corresponding
Noether integral of the non-constrained system.

Note that the affine distribution $\mathcal D_t\subset T_{\Sigma_t} Q$ can be seen as a sum $\mathcal D^0_t+{\xi}^0_t$, where
${\xi}_t^0=\sum_j \xi_j^0(t,q)\partial/\partial q_j$ is a  section of the affine distribution $\mathcal D_t$ over $\Sigma_t$, $t\in\R$.

\begin{cor}\label{sesta}
{\rm (i)} Let \eqref{moving0} be a symmetry vector field of the Hamiltonian $H$: the equation \eqref{ns2} is satisfied on $\mathcal M$,
where $\zeta$ is the prolongation \eqref{moving} of $\hat\xi$.
Then the constrained system has a Noether integral
\eqref{netIntPro2} if and only if $\xi\vert_{\Sigma_t}$ is a section
of the affine distribution $\xi^0_t+\mathcal R_t^0$ for all $t$.
In particular, if $\xi\vert_{\Sigma_t}$ is a section of the affine bundle $\mathcal D_t$, $t\in\R$, then $J$ is an integral of the system.

\medskip

{\rm (ii)}  Assume that the Hamiltonian $H$ does not depend of time. It is preserved if and only if the vector field $\xi^0_t$ is a section
of the reaction-annihilator distribution $\mathcal R_t^0$, $t\in\R$.
In particular, if the holonomic constraints \eqref{constraints0} do not depend on time and the nonholonomic constraints \eqref{constraints} are homogeneous (and possibly time-dependent), the Hamiltonian
is a first integral of the system.
\end{cor}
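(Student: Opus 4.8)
The plan is to obtain both parts by specializing Theorem \ref{cetvrta*} to the situation $\tau\equiv 1$, $F\equiv 0$, and then translating the condition that $\hat\xi\vert_\Sigma$ be a section of $\hat{\mathcal R}$ into the affine condition in the statement. First I would note that the hypothesis of part (i) is precisely \eqref{ns3} with $\tau\equiv 1$, so Theorem \ref{cetvrta*} applies; since $F\equiv 0$ its right-hand side vanishes, and the Noether function \eqref{netIntPro2} is preserved along the flow of \eqref{1} if and only if $\hat\xi\vert_\Sigma$ is a section of $\hat{\mathcal R}$. Setting $\tau\equiv 1$ in \eqref{anh2}, this means $\sum_{l=1}^k\lambda_l\big(a_0^l+\sum_{i=1}^n a_i^l\xi_i\big)=0$ for all momenta $p\in\mathcal M_t\vert_q$.

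Next I would rewrite this condition using a fixed section $\xi^0_t$ of the affine distribution $\mathcal D_t$. By definition $\xi^0_t$ satisfies the inhomogeneous constraints, that is $a_0^l=-\sum_{i=1}^n a_i^l\xi_i^0$ for $l=1,\dots,k$. Substituting this identity into the displayed equation converts it into $\sum_{l=1}^k\lambda_l\sum_{i=1}^n a_i^l(\xi_i-\xi_i^0)=0$, which is exactly the defining relation \eqref{anhilacija} of $\mathcal R_t^0$ evaluated on $\xi-\xi^0_t$. Hence the condition $\hat\xi\vert_\Sigma\in\hat{\mathcal R}$ is equivalent to $\xi\vert_{\Sigma_t}-\xi^0_t\in\mathcal R_t^0$, i.e. to $\xi\vert_{\Sigma_t}$ being a section of $\xi^0_t+\mathcal R_t^0$, which proves part (i). For its last assertion, if $\xi\vert_{\Sigma_t}$ is a section of $\mathcal D_t$ then $\xi-\xi^0_t$ is a section of the homogeneous distribution $\mathcal D_t^0$, and since $\mathcal D_t^0\subset\mathcal R_t^0$ we obtain $\xi\vert_{\Sigma_t}\in\xi^0_t+\mathcal R_t^0$, so $J$ is an integral.

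For part (ii) I would apply part (i) with $\xi\equiv 0$: then $\zeta=\partial/\partial t$, the Noether function becomes $J=-H$, and the invariance condition \eqref{ns2} reduces to $L_\zeta H=\partial H/\partial t=0$, which holds by the assumption that $H$ is time-independent. Part (i) then yields that $H$ is preserved if and only if $0\in\xi^0_t+\mathcal R_t^0$, and since $\mathcal R_t^0$ is a linear subspace this is equivalent to $\xi^0_t\in\mathcal R_t^0$. Finally, if the holonomic constraints \eqref{constraints0} are time-independent and the nonholonomic constraints \eqref{constraints} are homogeneous, then every $a_0^l$ vanishes, so $\mathcal D_t$ is a linear distribution and one may take $\xi^0_t\equiv 0$; then $\xi^0_t\in\mathcal R_t^0$ holds trivially and $H$ is conserved.

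The step requiring the most care is checking that the affine condition $\xi\vert_{\Sigma_t}\in\xi^0_t+\mathcal R_t^0$ does not depend on the chosen section $\xi^0_t$ of $\mathcal D_t$. Two sections of $\mathcal D_t$ differ by a section of $\mathcal D_t^0$, and the inclusion $\mathcal D_t^0\subset\mathcal R_t^0$ guarantees that the affine space $\xi^0_t+\mathcal R_t^0$ is well defined; once this is settled, the remaining arguments are straightforward substitutions into the reduction supplied by Theorem \ref{cetvrta*}.
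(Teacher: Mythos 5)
Your proof is correct and takes essentially the same route as the paper's: specialize Theorem \ref{cetvrta*} with $\tau\equiv 1$ and $F\equiv 0$, use the fact that $\xi^0_t$ satisfies the affine constraints to convert \eqref{anh2} into \eqref{anhilacija} evaluated on $\xi-\xi^0_t$, and obtain (ii) by setting $\xi\equiv 0$ (the paper handles the ``in particular'' of (i) via the equivalence $\hat\xi\vert_\Sigma\in\hat{\mathcal V}\Leftrightarrow\xi\vert_{\Sigma_t}\in\mathcal D_t$, while you use $\mathcal D_t^0\subset\mathcal R_t^0$ directly, which is an immediate variant of the same point). Your extra verification that the affine space $\xi^0_t+\mathcal R_t^0$ does not depend on the chosen section of $\mathcal D_t$ is a small refinement that the paper leaves implicit.
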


\begin{proof}
(i) According to Theorem \ref{cetvrta*}, the constrained system has a Noether integral
\eqref{netIntPro2} if and only if $\hat\xi\vert_\Sigma$ is a section of $\hat{\mathcal R}$, i.e., $\hat\xi\vert_{\Sigma}$ is a solution of equation \eqref{anh2}. Since $\xi^0_t$ satisfies the constraints \eqref{constraints1}, \eqref{constraints},
\[
a_0^l(t,q)+a_1^l(t,q)\xi^0_1+\dots+a_n^l(t,q)\xi^0_n=0, \qquad l=1,\dots,k,
\]
it follows that the vector field $\xi\vert_{\Sigma_t}-\xi^0_t$ is a solution of \eqref{anhilacija} (i.e., $\xi\vert_{\Sigma_t}-\xi^0_t$ is a section of $\mathcal R^0_t$, $t\in\R$) if and only if $\hat\xi\vert_{\Sigma}$ is a solution of equation \eqref{anh2}.

Further, for the last statement, we note that $\hat\xi\vert_\Sigma$ is a section of the admissible distribution $\hat{\mathcal V}$ if and only if  $\xi\vert_{\Sigma_t}$ is a section of $\mathcal D_t$, $t\in\R$.

\medskip

(ii) The statement follows from item (i) by taking $\xi\equiv 0$ in \eqref{moving0}: the zero section is a section of
$\xi^0_t+\mathcal R_t^0$ if and only if $\xi^0_t$ is a section
of $\mathcal R_t^0$.
In particular, if the holonomic constraints \eqref{constraints0} do not depend on time and the nonholonomic constraints \eqref{constraints} are homogeneous, then $\mathcal D_t\equiv \mathcal D_t^0$ and we can take $\xi^0_t \equiv 0$.
\end{proof}

A variant of Corollary \ref{sesta} is given in \cite{FS, FS2}, where the Noether function $J=\sum_j\xi_j p_j-H$ is referred as a \emph{moving energy} integral.
For the conservation of energy in systems with affine constraints see also \cite{BMB}.

\begin{rem}{\rm The Hamiltonian is also conserved if in item (i) of Corollary \ref{sesta} we assume the action of
a gyroscopic force $F$ (see \eqref{GF}).
}\end{rem}

\begin{exm}
{\rm Let
\[
H=H_0+V=\frac{1}{2m}(\mathbf p,\mathbf p)-mg(\mathbf k,\mathbf r)
\]
be the total energy of a material point considered
in Example \ref{Example1}. According to item (i) of Corollary \ref{sesta} and the above remark, it is conserved for the nonholonomic problem with a time-dependent homogeneous constraint
\[
a(t)y\dot x-\dot z=0.
\]
}\end{exm}

\begin{rem}{\rm
It would be interesting to have a geometrical setting for the Noether theorem
in quasi-coordinates given in \cite{D, M, SS}. The case, where the symmetries are induced by the symmetry vector field $\xi$ on the configuration space $Q$ is treated, e.g., in \cite{BMZ}. Our goal is also the analysis of a symmetry and integrability of the multidimensional rolling spheres problems introduced in \cite{Jo2}.
}\end{rem}

\subsection*{Acknowledgments} The author is very grateful to the referees for pointing misprints and various suggestions
that improved the exposition of the results.
The research was supported by the Serbian Ministry of
Science Project 174020, Geometry and Topology of Manifolds,
Classical Mechanics and Integrable Dynamical Systems.

\end{document}